\title{Bounding the difference between model predictive control and neural networks}
\newtheorem{problem}{Problem}
\def\PP{\textcolor{green!50!blue}}
\DeclareMathOperator{\blkdiag}{blkdiag}
 \let\Ginclude@graphics\@org@Ginclude@graphics 
\author{ \Name{Ross Drummond} \Email{ross.drummond@eng.ox.ac.uk}\\
 \Name{Stephen R. Duncan} \Email{stephen.duncan@eng.ox.ac.uk}\\
 \addr Department of Engineering Science, University of Oxford, 17 Parks Road, OX1 3PJ, Oxford, United Kingdom \\
 \AND
 \Name{Matthew C. Turner} \Email{m.c.turner@soton.ac.uk}\\
 \addr Department of Electronics and Computer Science,
   University of Southampton, Southampton, UK, SO17 1BJ
    \AND
 \Name{Patricia Pauli} \Email{patricia.pauli@ist.uni-stuttgart.de}\\
 \Name{Frank Allg\"ower} \Email{frank.allgower@ist.uni-stuttgart.de}\\
 \addr Institute for Systems Theory and Automatic Control, University of Stuttgart, 70569 Stuttgart, Germany
   }
\begin{document}

\maketitle

\begin{abstract}

There is a growing debate on whether the future of feedback control systems will be dominated by data-driven or model-driven approaches. Each of these two approaches has their own complimentary set of advantages and disadvantages, however, only limited attempts have, so far, been developed to bridge the gap between them. To address this issue, this paper introduces a method to bound the worst-case error between feedback control policies based upon model predictive control (MPC) and neural networks (NNs). This result is leveraged into an approach to automatically synthesize MPC policies minimising the worst-case error with respect to a NN. Numerical examples highlight the application of the bounds, with the goal of the paper being to encourage a more quantitative understanding of the relationship between  data-driven and model-driven control. 

\end{abstract}

\begin{keywords}%
  Neural network robustness, model predictive control, semi-definite programming.
\end{keywords}

\section{Introduction}

Two distinct approaches for the feedback control of dynamical systems are emerging, categorised into \emph{model-driven} and \emph{data-driven} methods. Model-driven approaches rely upon a physical model of the system to make predictions about its future behaviour and then compute a control input to shape this future response (\cite{green2012linear}). Data-driven methods, in particular model-free reinforcement learning, forgo the need of a predictive model and instead rely upon identifying patterns within the trajectory data itself to determine control policies (\cite{sutton2018reinforcement}). The benefits and limitations of these two approaches are also quite distinct. Model-driven approaches are often equipped with performance certificates for properties such as closed-loop stability, optimality and robustness (\cite{green2012linear}), with these certificates making them more  suited for safety-critical systems. But model-driven methods generally struggle in applications where existing models remain underdeveloped and when  applied to large-scale systems as computing the control polices can become intractable (\cite{vsiljak2005control}).

Data-driven methods, by contrast, do not require an accurate model of the system  (\cite{sutton2018reinforcement}) and can find control policies which would be unobtainable using model-based approaches (as they avoid the biases inherent within models), a point illustrated by the famous ``move 37'' from AlphaGo (\cite{silver2017mastering}) which confounded experts' analysis. A main drawback of most data-driven methods is their general lack of rigorous guarantees, e.g. robustness \citep{szegedy2013intriguing}, as small changes in their inputs can lead to large changes in their outputs, and explaining their behaviour remains an open problem. Robustness issues are a primary reason why we have yet to fully see the heralded successes of data-driven methods transfer from controlled, laboratory settings to real-life practical applications. 

The distinction between model and data driven control methods has driven interest in bridging the gap between them. In particular,  there is a need  to better characterise the relationship between each approach's benchmark methods which are, arguably,  model predictive control (MPC) for model-based control and deep neural networks (NNs) for the data-driven approach. Relating these two approaches will not only bring greater understanding but will also encourage the development of new control methods that combine each approach's complementary advantages.

\paragraph{Contribution:} The two main results of this paper are:
\begin{enumerate}
    \item A method to bound the difference between a given MPC and a NN controller (Theorem \ref{thm:approx}).
    \item A method to synthesise MPCs that robustly approximate neural networks (Theorem \ref{thm:syn}).
\end{enumerate}
The overriding theme of these results is to quantify the similarity between MPC  and neural network control actions, with the bounds holding  robustly, in the sense that they hold for fairly generic neural network architectures and MPC cost functions considering pre-defined hyper-rectangles as input constraints. The bounds are obtained by solving a semi-definite program (SDP), a class of convex optimisation programs  which have been extensively used for analysing the robustness of both NNs and MPC, as in \cite{fazlyab2020safety, li2006stability}.

\paragraph{Literature: } Several existing results have also been concerned with relating the disparate control schemes of MPC and NNs. Of particular note are the results on  explicit MPC where the user takes advantage of the fact that the optimal solution to the linear MPC problem can be expressed as a piece-wise linear function (\cite{bemporad2002explicit}). By identifying which region of this piece-wise linear function the current state of the system is in, known as solving the ``point location problem'', then the MPC action can be obtained without online computation of the finite horizon optimal control problem.  The main drawback of explicit MPC  is that solving the point location problem can be challenging (since the complexity of the piece-wise affine function, measured by the number of piece-wise regions, scales exponentially with the state dimension and the number of constraints, \cite{alessio2009survey}), which has limited its use in practice. To avoid this computational bottleneck, it has been proposed to approximate the piece-wise linear optimal MPC  policy with a neural network, \cite{parisini1995receding,zhang2019safe}, as the approximating NN will also be a piece-wise linear function if a ReLU function is used for the activation function \PP{\citep{darup2020exact}}. Evaluating a NN is generally  much simpler than solving the point location problem but can introduce an error which has proven difficult to bound. The bounds formulated in this paper are able to quantify the worst-case error of this approximation.

 The presented results can be understood within the more general context of applying  control theory to verify the robustness of neural networks. The theoretical framework for these methods are now classic, emerging from studies such as \cite{chu}, \cite{barabanov}, \cite{angeli2009convergence}, and have been extended in recent years in studies such as \cite{fazlyab2020safety}. The presented results contrast with these existing robustness problems in that they are concerned with the robustness of a neural network with respect to MPC, not to itself.  Some of the authors used similar ideas to quantify the similarity of one neural network with respect to another (\cite{li2021robust}), which allowed the approximation errors of pruned, quantised and reduced-order neural networks to be bounded (\cite{li2021robust,drummond2021reduced}).  The recently released study \cite{fabiani2021reliably}  which was also concerned with bounding the error between MPC and a NN must also be highlighted. The method developed in that paper involved solving a mixed integer optimisation problem to bound the error between MPC and NNs with ReLU activation functions in particular. In contrast, the results of this paper solve a SDP to verify the robustness of fairly generic NN architectures with respect to MPC. Finally, it is highlighted how this paper is concerned with quantifying the similarity between MPC and a NN but it is known, that, in certain circumstances, these two policies can be shown to be equivalent, \cite{darup2020exact}.

Examples of NN robustness failures (\cite{szegedy2013intriguing}) have driven mounting  interest in using semi-definite programming for NN  analysis and synthesis \cite{fazlyab2020safety}. These results include the estimation of Lipschitz constants for NN mappings \citep{fazlyab2019efficient, pauli2021training} and the stability analysis of  closed-loop systems using NN controllers \cite{yin2021stability}. The stability analysis  problem in particular can be linked to the problem of absolute stability (\cite{desoer2009feedback, barabanov, chu}), as a feed-forward neural network is a static, memoryless nonlinearity \cite{pauli2021linear}. Within this context, \cite{yin2021imitation} proposed approximating a MPC using NN controllers while satisfying LMI constraints to ensure closed-loop stability. Enforcing stability on a large-scale structure like a NN can be challenging, which motivates a workaround that finds and enforces bounds relating the two controller types. Doing so may lead to more elegant stability constraints to be formulated, as meeting these bounds have already been shown to imply stability guarantees \cite{hertneck2018learning}.

\subsection*{Notation}
The set of real numbers is denoted $\mathbb{R}$ and the natural numbers are $\mathbb{N}$. For dimension $n$, the set of real vectors is   $\mathbb{R}^n$, the set of vectors with non-negative elements is denoted $ \mathbb{R}^n_+$ and the zero vector is  ${\bm{0}_{n}}$. Element-wise positivity of a vector is denoted $> $ while positive definiteness of a matrix is denoted $\succ$, with similar definitions for negativity and semi-definiteness of matrices. The set of real matrices of dimension $n \times m$ is  $ \mathbb{R}^{n \times m}$ and the zero matrix is $\bm{0}_{n \times m}$. For dimension $n$, the space of positive (semi-)definite matrices is  ($\mathcal{S}^n_{\succeq 0}$) $\mathcal{S}^n_{\succ 0}$,  non-negative diagonal matrices is $\mathbb{D}^{n}_{\succeq 0}$ and the identity matrix is $I_n$. We use the $\star$ notation for symmetric matrices, as in $\begin{bmatrix}
 A & B \\ B^\top & C 
\end{bmatrix}= \begin{bmatrix}
 A^\top & B^\top \\ B & C^\top 
\end{bmatrix}= \begin{bmatrix}
 A & B \\ \star & C 
\end{bmatrix}.$
Several of the proofs of the lemmas are immediate and so are not shown. 

\section{Problem Set-up}

This section introduces the general problem considered in the paper of computing bounds and approximations between MPCs and NNs.

\subsection{Model predictive control}

Consider input-constrained model predictive control with  linear dynamics and a quadratic cost.

\begin{definition}\label{Def:MPC-1}
Consider a system with linear dynamics
\begin{align}\label{lin_dyns}
    x[k+1] = Ax[k] + Bu[k],
\end{align}
with state $x [k]\in \mathbb{R}^{n_x}$ and input $u[k]\in \mathbb{R}$. For some finite horizon $N \in \mathbb{N}$, the input constrained MPC control law with linear dynamics is defined as the solution to
{\small \begin{subequations}\label{MPC_opt}\begin{align}
z_{\text{MPC}}(x[k]) =  & \,\text{arg}\,\min_u \begin{bmatrix}
x[k+1] \\ \vdots \\ x[k+N]
\end{bmatrix}^\top Q
 \begin{bmatrix} x[k+1] \\ \vdots \\ x[k+N] \end{bmatrix}
+ 
\begin{bmatrix}
u[k] \\ \vdots \\u[k+N-1] 
\end{bmatrix}^\top R \begin{bmatrix} u[k] \\ \vdots \\ u[k+N-1]
\end{bmatrix},
\\
  & \text{subject to }Lu \leq b,~ b \geq 0,
\end{align}\end{subequations}}
for some $Q \in \mathbb{S}^{N n_x}_{\succeq 0} $, $R \in \mathbb{S}^{N}_{\succeq 0} $,  $L \in \mathbb{R}^{n_{in}\times N}$ and $b \in \mathbb{R}^{n_{in}}_+$.
\end{definition}

The linear dynamics of \eqref{lin_dyns} mean that  the optimisation problem \eqref{MPC_opt} can be converted into a quadratic program (QP). 
\begin{definition}\label{def:MPC-QP}
The MPC control law defined by Definition \ref{Def:MPC-1} can be expressed as the solution to a constrained quadratic program
\begin{subequations}\label{MPC_P}\begin{align}
    z_\text{MPC}(x[k])  = & ~ \text{arg}\,\min_{u} u^\top Hu  + u^\top h x[k] 
    \\
  & \text{subject to }Lu \leq b,~ b \geq 0 ,\label{ineq_MPC}
\end{align}
\end{subequations}
for some $H \in \mathcal{S}^{N}_{\succ 0} $, $h \in \mathbb{R}^{N \times n_x} $ parameterised by the matrices ($A,\,B,\,Q,\,R$) of Definition \ref{Def:MPC-1} and $b \in \mathbb{R}^{n_{in}}_+$. The first instant of the QP solution from \eqref{MPC_P} is taken as the MPC control action, as in 
\begin{align}
    u_\text{MPC}(x[k])= \begin{bmatrix} 1, & {\bm{0}_{N-1}}^\top\end{bmatrix}z_\text{MPC}(x[k]) = C_\text{MPC} z_\text{MPC}(x[k]). 
\end{align}
\end{definition}


\subsection{Neural networks}

The class of neural networks considered in this paper are those which can be structured in an  implicit form (\cite{el2021implicit}).
\begin{definition}\label{def:NN}
The considered class of neural networks $u_\text{NN}(x[k]): \mathbb{R}^{n_x} \to \mathbb{R}$ are those defined by the implicit form
\begin{subequations}\label{NN}
\begin{align}
  {z}_\text{NN} & =   \phi (W {z}_\text{NN} + W_0 x[k] + \beta ), \quad \phi(\cdot): \mathbb{R}^M \mapsto \mathbb{R}^M, \\
   u_\text{NN}(x[k])         & =   C_\text{NN} {z}_\text{NN} + D_\text{NN} ,
 \end{align}
where $ C_\text{NN} = \begin{bmatrix} 
       \bm{0}_{n_\ell}^\top,  & \hdots\,, &   \bm{0}_{n_\ell}^\top, & W^{o} 
       \end{bmatrix},$
       $D_\text{NN} \in \mathbb{R}$ and 
{\small \begin{align}
{z}_\text{NN} & = \begin{bmatrix}
   z^1_\text{NN} \\ z^2_\text{NN} \\ \vdots \\ z^\ell_\text{NN}
  \end{bmatrix}, \,
 W =  \begin{bmatrix}
    W^{1,1}   & W^{1,2} & \hdots & W^{1,\ell}  \\
     W^{2,1} & W^{2,2} & \ddots & 
     \vdots\\
     \vdots & \ddots & \ddots  & W^{\ell-1,\ell} \\
     W^{\ell,1}  & \dots & W^{\ell,\ell-1} & W^{\ell,\ell} 
    \end{bmatrix} ,\,
    W_0  = \begin{bmatrix}
              W^{1,0} \\
              W^{2,0} \\
              \vdots 
              \\ W^{\ell,0}
             \end{bmatrix} ,\,
 \beta  = \begin{bmatrix}
   \beta^0 \\ \beta^1 \\ \vdots \\ \beta^{\ell}     \end{bmatrix}.
\end{align}}
\end{subequations}
\end{definition}
In the above, $\phi(\cdot)$ are the nonlinear activation functions which could be any function satisfying one or more of the function properties of Definition \ref{def:prop} (Appendix 1), acting element-wise upon their arguments. The NN will be assumed to have $\ell$ hidden layers and, for the sake of notational simplicity, it will be assumed that each layer will be of dimension $n_\ell$, as in $z^j_\text{NN} \in \mathbb{R}^{n_\ell}$, $W^{i,j}\in \mathbb{R}^{n_\ell \times n_\ell}$,  $W^{i,0}\in \mathbb{R}^{n_\ell \times n_x}$, $\beta^j \in \mathbb{R}^{n_\ell}$ and ${W^o}^\top\in \mathbb{R}^{n_\ell}$ for $i = 1,\,\dots, \, \ell$,  $j = 1,\,\dots, \, \ell$. The total length of the vector $z_\text{NN} \in \mathbb{R}^M$ is defined as $M = n_\ell \times \ell$.

Adopting the implicit NN structure of \eqref{NN} in Definition \ref{def:NN}, also known as an equilibrium network  \cite{revay2020lipschitz,bai2019deep}, helps simplify the notation whilst also ensuring  the results are applicable to a wide class of NN architectures, including deep recurrent and feed-forward NNs.

\subsection{Bounds between  MPC and NN}

Bounds  quantifying the similarity between the NN and MPC control policies described in Definitions \ref{def:MPC-QP} and \ref{def:NN} are sought.  These bounds are formalised as solutions to the following problem. 
\begin{problem}\label{prob:approx}
For any state $x[k] \in \mathcal{X}$, find $\gamma_x \geq 0$ and $\gamma \geq 0$  such that the error between  the neural network and MPC control actions, namely $u_\text{NN}(x[k])-u_\text{MPC}(x[k])$, is bounded by
\begin{align}\label{bnd}
\|u_\text{NN}(x[k])-u_\text{MPC}(x[k])\|_2^2 \leq \gamma_x\|x[k]\|_2^2+ \gamma,\quad \forall x[k] \in \mathcal{X}.
\end{align}
\end{problem}

\begin{remark}\label{rem:gamma}
If both controllers enforce $x[k] = 0$ as an equilibrium point of the dynamical system, then it is possible to obtain $\gamma = 0$ since the two control actions should be equivalent at that point. 
\end{remark}


\section{Quadratic Constraints}

For the bounds of Problem \ref{prob:approx} to hold robustly, as in for all inputs $x[k]$ in some set $\mathcal{X}$, characterisations of both the nonlinear activation functions of the neural network and  the solution structure of the MPC control law have to be included within the problem formulation. Here, this information is incorporated  using quadratic constraints. 

\subsection{Quadratic constraints for the neural network}

The first step of this process is to determine the quadratic constraint of the nonlinear activation functions $\phi(.)$. Many different activation functions have been implemented within neural networks, including the tanh and ReLU functions, with each having their own characteristics. In general, each of these candidate activation functions satisfies certain properties which can be incorporated within a robust optimisation framework to compute the error bounds. 

Examples of these function properties are given in Definition \ref{def:prop} (Appendix 1). Most commonly adopted activation functions, including the tanh and ReLU, satisfy at least one of these properties. For example, the tanh function is slope-restricted, bounded and sector bounded, the ReLU function is slope-restricted while both it and its complement are positive and satisfy the complementarity conditions.  


By satisfying one or more of these function properties, the activation functions also satisfy quadratic constraints, as detailed in Lemma \ref{lem:prop} (Appendix 2). These quadratic constraints are defined with the multipliers ${\bf T}^i$ ($i \in \left\{\text{s},\text{sl},+,c+,B,0,\times,\otimes \right\}$), and in the following, the notation   ${\bf T} \in \mathbb{T}$ will be used to collect all of the relevant ${\bf T}^i$'s ($i \in \left\{\text{s},\text{sl},+,c+,B,0,\times,\otimes \right\}$) from Lemma \ref{lem:prop} (Appendix 2), with the set $\mathbb{T}$ characterising the positivity conditions of the ${\bf T}^i$'s.

If the neural network's activation function satisfies more than one of these quadratic constraints, as in it satisfies one or more of the properties in Lemma \ref{lem:prop} (Appendix 2), then each of  these quadratic constraints can be combined into a single quadratic constraint defined by a matrix $\lambda$, as considered previously in results such as \cite{drummond2021reduced,fazlyab2020safety}.  In this way, information about the nonlinear activation functions of the neural network can be incorporated into the robust optimisation problem for the error bounds.

\begin{lemma}\label{lem:ass_phi}
If the activation function $\phi(\cdot)$ satisfies one or more of the quadratic constraints of Lemma \ref{lem:prop} (Appendix 2), then there exists matrices 
\begin{align}\label{lambda_hat}
\hat{\lambda}({\bf T}) = \begin{bmatrix} \hat{\lambda}_{11}&    \hat{\lambda}_{12} &  \hat{\lambda}_{13} 
\\
\star &   \hat{\lambda}_{22} &  \hat{\lambda}_{23}  
\\
\star &  \star &  \hat{\lambda}_{33}
\end{bmatrix}    ,
\quad 
{\lambda}({\bf T}) = \begin{bmatrix} {\lambda}_{11}&    {\lambda}_{12} &  {\lambda}_{13} 
\\
\star &   {\lambda}_{22} &  {\lambda}_{23}  
\\
\star &  \star &  {\lambda}_{33}
\end{bmatrix} ,
\end{align}
defined by the ${\bf T}^i$'s $(i \in \left\{\text{s},\text{sl},+,c+,B,0,\times,\otimes \right\})$ of Lemma \ref{lem:prop} (Appendix 2), such that with the vectors
\begin{subequations}\begin{align}\label{ineq:lem}
    \hat{\mu}(x[k])  & =  \begin{bmatrix}
      (W z_\text{NN} + W_0 x[k] + \beta)^\top, & {z_\text{NN}}^\top, & 1
    \end{bmatrix}^\top,
    \\
         {\mu}(x[k])  & = \begin{bmatrix}
{x[k]}^\top,  & {z_\text{NN}}^\top, & 1
\end{bmatrix}^\top,
\end{align}\end{subequations}
  the following inequality holds
\begin{align}
s_\text{NN}(x[k]) = \hat{\mu}(x[k])^T \hat{\lambda}({\bf T}) \hat{\mu}(x[k])
=\mu(x[k])^T \lambda({\bf T}) \mu(x[k])
 \geq 0.
\end{align} 

\end{lemma}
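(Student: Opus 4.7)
The plan is to prove the lemma by explicit construction of the two multiplier matrices. The key observation is that each individual quadratic constraint from Lemma \ref{lem:prop} can be written as a quadratic form in the vector $\hat{\mu}(x[k])$, whose three blocks correspond respectively to the pre-activation input $W z_\text{NN} + W_0 x[k]+\beta$, the post-activation output $z_\text{NN}$, and the scalar $1$. For every property $i \in \{\text{s},\text{sl},+,c+,B,0,\times,\otimes\}$ that $\phi(\cdot)$ satisfies, I would first write out the corresponding constraint in the form $\hat{\mu}(x[k])^\top \hat{\lambda}^i(\mathbf{T}^i) \hat{\mu}(x[k]) \geq 0$, choosing the block pattern of $\hat{\lambda}^i$ so that the multiplier $\mathbf{T}^i$ enters linearly and so that the entries of $\hat{\lambda}^i$ involve only $\mathbf{T}^i$ and $\phi$-independent constants (biases, sector parameters, etc.).

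Next, I would set $\hat{\lambda}(\mathbf{T}) = \sum_{i} \hat{\lambda}^i(\mathbf{T}^i)$, with the sum taken over those $i$ for which $\phi$ satisfies the corresponding property. Because every $\mathbf{T}^i \in \mathbb{T}$ lives in the appropriate positive cone (non-negative entries or positive semi-definiteness, depending on the property), each summand is point-wise non-negative on the relevant domain, and therefore so is the total quadratic form $s_\text{NN}(x[k]) = \hat{\mu}(x[k])^\top \hat{\lambda}(\mathbf{T}) \hat{\mu}(x[k]) \geq 0$. This establishes the outer inequality and simultaneously defines $\hat{\lambda}(\mathbf{T})$.

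The equality of the two quadratic forms then follows from the implicit form of the network. Specifically, the identity
\begin{equation*}
\hat{\mu}(x[k]) = \underbrace{\begin{bmatrix} W_0 & W & \beta \\ 0 & I_M & 0 \\ 0 & 0 & 1 \end{bmatrix}}_{=:\,\Pi} \mu(x[k])
\end{equation*}
holds by direct inspection, using $z_\text{NN} = \phi(W z_\text{NN}+W_0 x[k]+\beta)$ to rewrite the first block. Defining $\lambda(\mathbf{T}) := \Pi^\top \hat{\lambda}(\mathbf{T}) \Pi$ and partitioning the result into the $3\times 3$ block structure of \eqref{lambda_hat} yields $\mu(x[k])^\top \lambda(\mathbf{T}) \mu(x[k]) = \hat{\mu}(x[k])^\top \hat{\lambda}(\mathbf{T}) \hat{\mu}(x[k])$, completing the chain of equalities.

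The main obstacle I anticipate is bookkeeping rather than anything conceptually deep: each of the eight candidate properties in Lemma \ref{lem:prop} has its own natural quadratic representation (a sector bound couples pre- and post-activation blocks, a boundedness condition touches only the second block, positivity conditions produce linear terms interacting with the constant block, etc.), and one must verify that each representation fits into the common $\hat{\mu}$-template with consistent block dimensions so that the sum $\sum_i \hat{\lambda}^i$ is well-defined. Beyond that, the structural claim is essentially a congruence transformation, and non-negativity is inherited term-by-term from the definitions in Lemma \ref{lem:prop} and the positivity conditions encoded in $\mathbb{T}$.
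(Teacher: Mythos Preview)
Your proposal is correct and is essentially the same argument as the paper's, just unpacked: the paper's proof is the one-liner ``Immediate from applying the S-procedure on Lemma~\ref{lem:prop},'' and what you have written is precisely that S-procedure step (summing the individual quadratic constraints with admissible multipliers) together with the congruence $\lambda = \Pi^\top \hat{\lambda}\,\Pi$ that converts between the $\hat{\mu}$- and $\mu$-representations.
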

\begin{proof}
{Immediate from applying the S-procedure (\cite{jonsson2001lecture}) on  Lemma \ref{lem:prop} (Appendix 2).}
\end{proof}

\subsection{Quadratic constraints for model predictive control}

Following \cite[Result 5]{li2006stability}, a quadratic constraint can also be characterised for the solution of an MPC control law defined by the QP of Definition \ref{def:MPC-QP}. In \cite{li2006stability}, this quadratic constraint was introduced to characterise the closed-loop stability of linear systems controlled by MPC, using Zames-Falb multipliers to provide the stability certificates (\cite{zames1968stability,turner2021discrete}). Here, we exploit this quadratic constraint  to include information about the solution structure of the MPC control law into the error bound computation.  

\begin{theorem}\label{thm:Will}[\cite{li2006stability}, Result 5]
The MPC control law characterised by the QP of Definition~\ref{def:MPC-QP} satisfies, for any $\tau_\text{QP} \geq 0$,
    \begin{align}\label{s_QP_1}
s_\text{QP}(x[k])  =-\tau_\text{QP}\left({z_{\text{MPC}}}^\top Hz_{\text{MPC}} + {z_{\text{MPC}}}^\top hx[k]\right) \geq 0.
\end{align}
\end{theorem}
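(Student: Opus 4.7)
The plan is to exploit the optimality of $z_\text{MPC}$ as the argmin of the QP in Definition \ref{def:MPC-QP} and then use the crucial structural assumption that $b \geq 0$, which ensures $u = 0$ is feasible. Concretely, I would argue that $z_\text{MPC}$ must do at least as well as the zero vector in the quadratic objective, and since the objective vanishes at $u = 0$, the objective value at $z_\text{MPC}$ is non-positive. Multiplying by $-\tau_\text{QP} \leq 0$ then flips the inequality into the desired direction.

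The key steps I would carry out in order are as follows. First, observe that the constraint set $\{u : Lu \leq b\}$ of the QP \eqref{MPC_P} contains $u = 0$, because $L \cdot \bm{0}_N = \bm{0}_{n_{in}} \leq b$ by the standing assumption $b \geq 0$ in Definitions \ref{Def:MPC-1} and \ref{def:MPC-QP}. Second, note that the QP objective $J(u;x[k]) = u^\top H u + u^\top h x[k]$ satisfies $J(\bm{0}_N;x[k]) = 0$. Third, by the definition of $z_\text{MPC}(x[k])$ as the minimiser over the feasible set, we obtain
\begin{align*}
z_\text{MPC}^\top H z_\text{MPC} + z_\text{MPC}^\top h x[k] \;=\; J(z_\text{MPC};x[k]) \;\leq\; J(\bm{0}_N;x[k]) \;=\; 0.
\end{align*}
Fourth, multiplying this inequality by the non-negative scalar $\tau_\text{QP} \geq 0$ and negating yields
\begin{align*}
s_\text{QP}(x[k]) \;=\; -\tau_\text{QP}\bigl(z_\text{MPC}^\top H z_\text{MPC} + z_\text{MPC}^\top h x[k]\bigr) \;\geq\; 0,
\end{align*}
as required.

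There is really no serious obstacle here — the argument hinges only on the feasibility of the origin and the sign of the multiplier. The only subtle point worth flagging is that the result relies crucially on $b \geq 0$; were the constraint to exclude the origin, the simple comparison with the value $J(\bm{0}_N;x[k]) = 0$ would fail and one would instead need to invoke the KKT conditions explicitly (stationarity together with primal/dual feasibility and complementary slackness, as in the original treatment of \cite{li2006stability}). Under the present assumptions, however, this KKT machinery collapses to the one-line optimality comparison described above.
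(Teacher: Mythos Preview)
Your argument is correct and is precisely the intended one: the paper does not spell out a proof here but merely cites \cite{li2006stability}, and the underlying reason the inequality holds is exactly the feasibility of $u=\bm{0}_N$ (guaranteed by $b\geq 0$) together with optimality of $z_\text{MPC}$, which you have identified and executed cleanly. Your closing remark about the KKT route being unnecessary under the standing assumption $b\geq 0$ is also accurate.
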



On top of this quadratic constraint for the QP cost function \eqref{s_QP_1}, another quadratic constraint for the inequality bounds \eqref{ineq_MPC} can also be devised since, for any positive vector $q\in \mathbb{R}^{n_{in}}_+$,
\begin{align}\label{eq:s_in}
    s_{in}(x[k]) = q^\top  \left(b- Lz_{MPC}\right) +  \left(b- Lz_{MPC}\right)^\top q \geq 0.
\end{align}
The function properties of Definition \ref{def:prop} (Appendix 1) can be used to build satisfactory $q$. For example, if $\phi(.)$ and its complement are positive, as for the ReLU, then, for some $\tau_{in}\in \mathbb{R}_+^{n_{in}}$, $(\tau_{in}^{+},\,\tau_{in}^{c+})\in \mathbb{R}_+^{ n_{in}\times M}$, one could define
\begin{align}
    q = {\tau_{in}} 
  +(\tau_{in}^{+}+\tau_{in}^{c+}(I_M-W)) z_\text{NN} - \tau_{in}^{c+}W_0 x[k] - \tau_{in}^{c+}\beta
\end{align}
which links the NN and MPC vectors, $z_\text{NN}$ and $z_\text{MPC}$ within the quadratic constraint.
The quadratic constraint for the inequality constraints $ s_{in}(x[k])$ can then be combined with that of the cost function from Theorem \ref{thm:Will} to give the following characterisation of the MPC solution. 
\begin{lemma}\label{lem:MPC_qc}
For any $x[k] \in \mathbb{R}^{n_x}$, the MPC solution of the QP given in Definition \ref{def:MPC-QP} satisfies
  \begin{align} \label{s_QP}
s_\text{MPC}(x[k]) =~ &  s_\text{QP}(x[k])+ s_\text{in}(x[k]) \geq 0,
\end{align}
with $s_\text{QP}(x[k])$ defined in \eqref{s_QP_1} and $s_\text{in}(x[k])$ in \eqref{eq:s_in}.
\end{lemma}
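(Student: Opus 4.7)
The plan is to show the non-negativity of $s_\text{MPC}(x[k])$ by verifying the non-negativity of its two summands separately and then combining them by addition. Since the lemma defines $s_\text{MPC}(x[k]) = s_\text{QP}(x[k]) + s_\text{in}(x[k])$ purely as a sum, there is no interaction term to manage and the proof reduces to two independent non-negativity checks.

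The first summand, $s_\text{QP}(x[k])$, is non-negative by direct invocation of Theorem \ref{thm:Will}, which reproduces \cite[Result 5]{li2006stability} and states exactly this inequality for any $\tau_\text{QP} \geq 0$ at the QP-optimal point $z_\text{MPC}$. No further work is needed here beyond citing the result.

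For the second summand, $s_\text{in}(x[k])$, I would argue by primal feasibility combined with multiplier positivity. Since $z_\text{MPC}$ solves \eqref{MPC_P} it must in particular satisfy the inequality constraint \eqref{ineq_MPC}, so $b - L z_\text{MPC} \geq 0$ component-wise. Given any $q \in \mathbb{R}^{n_{in}}_+$, the scalar $q^\top (b - L z_\text{MPC})$ is then a sum of products of non-negative numbers and is therefore non-negative; the symmetrized form in \eqref{eq:s_in} is just twice this scalar, so $s_\text{in}(x[k]) \geq 0$.

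The only genuinely non-trivial point is to confirm that the example construction of $q$ given in the text before the lemma does in fact lie in $\mathbb{R}^{n_{in}}_+$, since this is what links the MPC multipliers to the NN variables $z_\text{NN}$ and keeps the bound state-dependent. I would verify this componentwise under the standing assumption that $\phi(\cdot)$ and its complement are positive (as for ReLU): in that case $z_\text{NN} \geq 0$ and $(I_M - W)z_\text{NN} - W_0 x[k] - \beta \geq 0$ from the implicit definition \eqref{NN}, so with $\tau_{in},\,\tau_{in}^{+},\,\tau_{in}^{c+}$ non-negative each additive term in $q$ is non-negative. Summing the two verified inequalities then yields \eqref{s_QP}, completing the proof. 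The main obstacle, such as it is, is bookkeeping: making sure the positivity conditions on the various multipliers are the same ones collected in the set $\mathbb{T}$ used later, so that this lemma plugs cleanly into the SDP formulation of Theorem \ref{thm:approx}.
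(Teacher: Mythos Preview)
Your proposal is correct and matches the paper's (implicit) argument: the paper lists this lemma among those whose proofs are ``immediate and so are not shown,'' and the immediate argument is precisely the one you give --- $s_\text{QP}(x[k]) \geq 0$ by Theorem~\ref{thm:Will}, $s_\text{in}(x[k]) \geq 0$ is already asserted in \eqref{eq:s_in} from primal feasibility plus $q \in \mathbb{R}^{n_{in}}_+$, and the sum of non-negatives is non-negative. Your additional componentwise check that the particular $q$ built from $\tau_{in},\,\tau_{in}^{+},\,\tau_{in}^{c+}$ is indeed non-negative is a useful sanity check that the paper leaves to the reader, but it is not strictly needed for the lemma as stated, since \eqref{eq:s_in} is posed for an arbitrary $q \in \mathbb{R}^{n_{in}}_+$.
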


\subsection{Quadratic constraints for the input}

Information about the input constraint set $x[k] \in \mathcal{X}$ must also be included within the robust optimisation problem for the bounds. By restricting the input space to a hyper-rectangle, the following lemma defines a quadratic constraint for this set containment.

\begin{definition}
For some upper $\overline{x}$ and lower $\underline{x}$ bound, define $\mathcal{X}:= \{x[k]:  \, x[k] \geq \underline{x},\,x[k] \leq \overline{x} \}$.
\end{definition}
\begin{lemma}[\cite{fazlyab2020safety}]\label{lem:x_bound}
If $x[k]\in \mathcal{X}$, then, for any $\tau_x \in \mathbb{D}_+^{n_x}$, the following holds
\begin{align}
    s_{\mathcal{X}}(x[k]) = \begin{bmatrix} x[k]^\top & 1 \end{bmatrix}
    \begin{bmatrix} -2\tau_x & \tau_x (\overline{x} + \underline{x})  \\ \star & -(\overline{x}^\top\tau_x\underline{x}+\underline{x}^\top\tau_x\overline{x} ) \end{bmatrix}
    \begin{bmatrix} x[k] \\ 1 \end{bmatrix} \geq 0.
\end{align}
\end{lemma}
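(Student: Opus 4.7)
The plan is to recognize the quadratic form as a signed product of the component-wise slack variables $(x[k] - \overline{x})$ and $(x[k] - \underline{x})$, weighted by the diagonal matrix $\tau_x$, and then exploit the sign information provided by the hyper-rectangle constraint.

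First I would multiply out the quadratic form explicitly. Writing $x = x[k]$ for brevity, expansion yields
\begin{align*}
s_{\mathcal{X}}(x) = -2\, x^\top \tau_x x + 2\, x^\top \tau_x (\overline{x} + \underline{x}) - \bigl(\overline{x}^\top \tau_x \underline{x} + \underline{x}^\top \tau_x \overline{x}\bigr).
\end{align*}
Because $\tau_x$ is diagonal, the cross terms $\overline{x}^\top \tau_x \underline{x}$ and $\underline{x}^\top \tau_x \overline{x}$ are equal, and similarly $x^\top \tau_x \overline{x} = \overline{x}^\top \tau_x x$ and $x^\top \tau_x \underline{x} = \underline{x}^\top \tau_x x$.

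Next I would verify the factorisation
\begin{align*}
s_{\mathcal{X}}(x) = -2\,(x - \overline{x})^\top \tau_x (x - \underline{x}),
\end{align*}
which follows by expanding the right-hand side and comparing to the previous expression term by term. This is the crux of the argument; the specific entries of the matrix in the lemma statement have been chosen precisely so that this factorisation holds.

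Finally, since $\tau_x \in \mathbb{D}_+^{n_x}$ is diagonal with non-negative entries, the bilinear form reduces to a sum of scalar products,
\begin{align*}
(x - \overline{x})^\top \tau_x (x - \underline{x}) = \sum_{i=1}^{n_x} (\tau_x)_{ii}\, (x_i - \overline{x}_i)(x_i - \underline{x}_i).
\end{align*}
For each $i$, the hyper-rectangle constraint $x \in \mathcal{X}$ enforces $x_i - \overline{x}_i \leq 0$ and $x_i - \underline{x}_i \geq 0$, so their product is non-positive, and $(\tau_x)_{ii} \geq 0$ preserves the sign. Multiplying by $-2$ flips the sign, giving $s_{\mathcal{X}}(x) \geq 0$, as required.

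I do not foresee a substantive obstacle: the only subtlety is checking that the diagonal structure of $\tau_x$ is genuinely used (without it, the factorisation above would fail because $\tau_x$ would not commute with the transpositions in the cross terms), and that the hyper-rectangle bounds enter only through the component-wise sign argument in the final step.
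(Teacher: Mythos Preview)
Your argument is correct: the factorisation $s_{\mathcal{X}}(x) = -2\,(x-\overline{x})^\top \tau_x (x-\underline{x})$ followed by the component-wise sign analysis is exactly the standard derivation of this hyper-rectangle constraint. The paper does not supply its own proof here---it attributes the lemma to \cite{fazlyab2020safety} and treats it as immediate---so there is nothing further to compare.
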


\section{Error bound between MPC and NN}

The quadratic constraints of Lemmas \ref{lem:ass_phi}, \ref{lem:MPC_qc} and \ref{lem:x_bound} allow information about the nonlinear activation functions, the solution structure of the MPC controller and the set containment $x[k] \in \mathcal{X}$ to be included within the error bound optimisation. In this way, the error bounds of Problem \ref{prob:approx} can be computed by solving the following SDP.

\begin{theorem}\label{thm:approx}
Consider a given NN structure  of Definition \ref{def:NN} and MPC control law of Definition \ref{def:MPC-QP}. For  given weights $\omega_x$ and  $\omega$, if there exists a solution to 
\begin{subequations}\begin{align}
\min_{{\bf T},\,\tau_{QP},\,\tau_{in},\, \gamma_x, \,\gamma, \,\tau_x} &\quad \omega_x\gamma_x + \omega\gamma
\\
\text{subject to: } &\quad  \Lambda_\text{NN}({\bf T}) + \Lambda_\text{MPC}(\tau_{QP},\tau_{in})+ X(\tau_x)  + \Omega(\gamma_x,\,\gamma) \preceq 0,\label{LMI_11}
\\
 &\quad  {\bf T} \in \mathbb{T}, \tau_x \in \mathbb{D}^{n_x}_+,\,\tau_{QP} \geq 0, \tau_{in} \in \mathbb{R}_{+}^{n_{in}}, \, \gamma_x \geq 0 , \gamma \geq 0,
\end{align}
\end{subequations}
where the matrices $\Lambda_\text{NN}({\bf T})$, $\Lambda_\text{MPC}(\tau_{QP},\tau_{in})$, $ X(\tau_x)$ and  $\Omega(\gamma_x,\,\gamma)$ are defined in \eqref{mats} of Appendix 3, then 
\begin{align} \label{thm1_bnd_orig}
    \|u_\text{NN}(x[k])-u_\text{MPC}(x[k])\|_2^2 \leq \gamma_x\|x[k]\|_2^2+ \gamma,\quad \forall x[k] \in \mathcal{X}.
\end{align}
\end{theorem}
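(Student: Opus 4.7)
The plan is to apply the S-procedure in the standard way used throughout the NN verification literature: combine the three ``signed'' quadratic constraints from Lemmas \ref{lem:ass_phi}, \ref{lem:MPC_qc}, and \ref{lem:x_bound} to dominate the error term, with the LMI \eqref{LMI_11} being precisely the condition that this domination holds.

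First I would introduce an augmented vector, say
\[
\xi(x[k]) = \begin{bmatrix} x[k]^\top, & {z_\text{NN}}^\top, & {z_\text{MPC}}^\top, & 1 \end{bmatrix}^\top,
\]
that contains all the variables appearing in any of the quadratic expressions in play. Then I would write each of the four terms as a quadratic form $\xi^\top(\cdot)\xi$ in this vector. Specifically: (i) the quantity to be bounded, $\|u_\text{NN}(x[k])-u_\text{MPC}(x[k])\|_2^2 - \gamma_x\|x[k]\|_2^2 - \gamma$, is quadratic in $(x[k],z_\text{NN},z_\text{MPC},1)$ after substituting $u_\text{NN}=C_\text{NN} z_\text{NN}+D_\text{NN}$ and $u_\text{MPC}=C_\text{MPC} z_\text{MPC}$, yielding the matrix $\Omega(\gamma_x,\gamma)$; (ii) $s_\text{NN}(x[k])$ from Lemma \ref{lem:ass_phi}, after using the implicit NN identity to rewrite $\hat\mu$ in terms of $\mu$, can be padded with zeros in the $z_\text{MPC}$ block to give $\Lambda_\text{NN}({\bf T})$; (iii) $s_\text{MPC}(x[k])$ from Lemma \ref{lem:MPC_qc}, which is quadratic in $(x[k],z_\text{NN},z_\text{MPC},1)$ once the linking term $q$ is expanded, gives $\Lambda_\text{MPC}(\tau_\text{QP},\tau_\text{in})$; (iv) $s_\mathcal{X}(x[k])$ gives $X(\tau_x)$, padded with zeros on the $z_\text{NN},z_\text{MPC}$ blocks.

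Next I would invoke the sign properties guaranteed by the lemmas: for any ${\bf T}\in\mathbb{T}$, $\tau_\text{QP}\geq 0$, $\tau_\text{in}\in\mathbb{R}_+^{n_\text{in}}$, $\tau_x\in\mathbb{D}_+^{n_x}$, and for any realisation of the NN and MPC mappings evaluated at $x[k]\in\mathcal{X}$, we have $s_\text{NN}\geq 0$, $s_\text{MPC}\geq 0$, $s_\mathcal{X}\geq 0$. Pre- and post-multiplying the LMI \eqref{LMI_11} by $\xi(x[k])^\top$ and $\xi(x[k])$ then yields
\[
\xi^\top\Omega(\gamma_x,\gamma)\xi \;\leq\; -s_\text{NN}(x[k]) - s_\text{MPC}(x[k]) - s_\mathcal{X}(x[k]) \;\leq\; 0,
\]
which, once $\Omega$ is identified with the error form in step (i), is exactly the desired inequality \eqref{thm1_bnd_orig}.

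The main obstacle is really just a bookkeeping one: verifying that each of the four scalar expressions lifts correctly to a quadratic form on the common augmented vector $\xi$, with the zero paddings in the appropriate blocks, and that the sum matches the LMI \eqref{LMI_11} exactly as defined via \eqref{mats} in Appendix 3. There is no conceptual difficulty beyond the S-procedure itself, since the nonnegativity of each $s_\bullet$ has already been established in the preceding lemmas; the only subtlety worth double-checking is the rewriting of $\hat\mu$ into $\mu$ in $s_\text{NN}$ (made possible by the implicit-form fixed-point equation for $z_\text{NN}$) and the expansion of the linking multiplier $q$ in $s_\text{in}$, which is what couples $z_\text{NN}$ and $z_\text{MPC}$ within $\Lambda_\text{MPC}$.
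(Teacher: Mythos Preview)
Your proposal is correct and follows essentially the same argument as the paper: introduce the augmented vector $\zeta(x[k]) = [x[k]^\top,\, z_\text{NN}^\top,\, z_\text{MPC}^\top,\, 1]^\top$, pre- and post-multiply the LMI \eqref{LMI_11} by it to obtain $s_\text{NN}+s_\mathcal{X}+s_\text{MPC}+\|u_\text{NN}-u_\text{MPC}\|_2^2-\gamma_x\|x[k]\|_2^2-\gamma\leq 0$, and then invoke the nonnegativity of the three $s_\bullet$ terms from Lemmas \ref{lem:ass_phi}, \ref{lem:MPC_qc}, \ref{lem:x_bound}. Your discussion of the bookkeeping (zero padding, the $\hat\mu\to\mu$ rewrite, the expansion of $q$) is more explicit than the paper's, but the underlying S-procedure argument is identical.
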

\begin{proof}
Multiplying \eqref{LMI_11} on the left by 
\begin{align}
\zeta(x[k]) = \begin{bmatrix}
x[k]^\top,  & {z_\text{NN}}^\top,& {z_\text{MPC}}^\top,  & 1
\end{bmatrix}^\top
\end{align}
and on the right by it's transpose implies that, when $x[k] \in \mathcal{X}$, then
\begin{align}
    s_{NN}(x[k]) + s_{\mathcal{X}}(x[k])+ s_{MPC}(x[k])+  \|u_\text{NN}(x[k])-u_\text{MPC}(x[k])\|_2^2 -\gamma_x\|x[k]\|_2^2- \gamma \leq 0.
\end{align}
Lemmas \ref{lem:ass_phi}, \ref{lem:MPC_qc} and \ref{lem:x_bound} state that $s_{NN}(x[k]) \geq 0$, $s_{\mathcal{X}}(x[k]) \geq 0 $, $s_{MPC}(x[k]) \geq 0 $ when $x[k] \in \mathcal{X}$. Hence, the bound \eqref{thm1_bnd_orig} must hold. 
\end{proof}


\section{Automatic synthesis of an MPC approximating a NN}

Theorem \ref{thm:approx}  allows the error between \emph{given} MPC and NN control policies to be bounded. This idea can be extended to \emph{synthesize} MPC control laws that minimise the worst-case error to the NN. 

\begin{theorem}\label{thm:syn}
Consider a given NN structure  of \eqref{NN}. For the given weights $\omega_x$ and  $\omega$, if there exists a solution to 
\begin{subequations} \label{cost_syn}\begin{align}
\min_{{\bf T},\,H,\,h,\,L,\,b,\, \gamma_x, \,\gamma, \,\tau_x} &\quad \omega_x\gamma_x + \omega\gamma
\\
\text{subject to: } &\quad  \Lambda_\text{NN}({\bf T}) + \check{\Lambda}_\text{MPC}(H,h,L,b)+ X(\tau_x) + \Omega(\gamma_x,\,\gamma)  \preceq 0,\label{LMI_1}
\\
 &\quad  {\bf T} \in \mathbb{T}, \tau_x \in \mathbb{D}_+^{n_x}, \, \gamma_x \geq 0 , \gamma \geq 0, \, H \in \mathcal{S}^{N}_{\succ 0},  \, h \in \mathbb{R}^{N \times n_x},\,b \in \mathbb{D}_{+}^{n_{in}},  L \in \mathbb{R}^{{n_{in}}\times N}
\end{align}
\end{subequations}
with the matrices $\Lambda_\text{NN}({\bf T})$, $\check{\Lambda}_\text{MPC}(H,h,L,b)$, $ X(\tau_x)$ and  $\Omega(\gamma_x,\,\gamma)$ defined in Appendix 3, then 
\begin{align} \label{thm1_bnd}
    \|u_\text{NN}(x[k])-u_\text{MPC}(x[k])\|_2^2 \leq \gamma_x\|x[k]\|_2^2+ \gamma,\quad \forall x[k] \in \mathcal{X}.
\end{align}
\end{theorem}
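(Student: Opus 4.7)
The plan is to mirror the proof of Theorem \ref{thm:approx} essentially verbatim, exploiting the fact that the quadratic constraint $s_\text{MPC}(x[k]) \geq 0$ supplied by Lemma \ref{lem:MPC_qc} holds for any valid MPC, that is, for any $H \in \mathcal{S}^{N}_{\succ 0}$, $h \in \mathbb{R}^{N \times n_x}$, $L \in \mathbb{R}^{n_{in}\times N}$, and $b \in \mathbb{R}^{n_{in}}_+$. Whatever values of $(H,h,L,b)$ the SDP returns, the resulting QP of Definition \ref{def:MPC-QP} is a bona fide MPC problem whose optimiser $z_\text{MPC}(x[k])$ satisfies $s_\text{QP}(x[k]) \geq 0$ and $s_\text{in}(x[k]) \geq 0$ at that $x[k]$, hence $s_\text{MPC}(x[k]) \geq 0$ as well.

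The first step is to left-multiply the LMI \eqref{LMI_1} by the lifted vector
\begin{equation}
\zeta(x[k]) = \begin{bmatrix} x[k]^\top, & z_\text{NN}^\top, & z_\text{MPC}^\top, & 1 \end{bmatrix}^\top
\end{equation}
and right-multiply by its transpose. This collapses the LMI into the scalar inequality
\begin{equation}
s_\text{NN}(x[k]) + s_\text{MPC}(x[k]) + s_{\mathcal{X}}(x[k]) + \|u_\text{NN}(x[k]) - u_\text{MPC}(x[k])\|_2^2 - \gamma_x\|x[k]\|_2^2 - \gamma \leq 0,
\end{equation}
provided the block matrices are defined such that $\zeta^\top \Lambda_\text{NN} \zeta = s_\text{NN}$, $\zeta^\top \check{\Lambda}_\text{MPC} \zeta = s_\text{MPC}$, $\zeta^\top X \zeta = s_{\mathcal{X}}$, and $\zeta^\top \Omega \zeta = \|u_\text{NN} - u_\text{MPC}\|_2^2 - \gamma_x\|x\|_2^2 - \gamma$. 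The second step is to invoke Lemmas \ref{lem:ass_phi}, \ref{lem:MPC_qc}, and \ref{lem:x_bound} to discard the three non-negative $s$-terms, which yields precisely the claimed bound \eqref{thm1_bnd}.

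The main obstacle is not mathematical, but rather ensuring convexity of the synthesis SDP: the MPC quadratic constraint $s_\text{MPC}$ contains products of the form $\tau_\text{QP}\cdot H$, $\tau_\text{QP}\cdot h$, and $q\cdot L$, which are bilinear in the decision variables. To keep \eqref{LMI_1} an LMI in $(H,h,L,b)$, the multipliers $\tau_\text{QP}$ and $\tau_\text{in}$ from Theorem \ref{thm:approx} have to be fixed (for example, $\tau_\text{QP}=1$, absorbing its scaling into $H$ and $h$), and the positive multipliers that premultiply $(b - L z_\text{MPC})$ in $s_\text{in}$ must be chosen so that their products with $L$ and $b$ appear linearly inside $\check{\Lambda}_\text{MPC}(H,h,L,b)$. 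Once these choices are incorporated into the definition of $\check{\Lambda}_\text{MPC}$ in Appendix 3, the S-procedure argument above goes through unchanged and delivers the bound.
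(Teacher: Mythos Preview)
Your proof is correct and matches the paper's own argument almost exactly: the paper fixes $\tau_\text{QP}=1$ and each entry of $\tau_\text{in}$ to one so that $\check{\Lambda}_\text{MPC}(H,h,L,b)$ is linear in the MPC data, invokes Lemma~\ref{lem:MPC_qc} for the resulting QP, and then defers to the argument of Theorem~\ref{thm:approx}. You anticipated both the convexity issue and its resolution, so there is nothing to add.
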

\begin{proof}
From Lemma \ref{lem:MPC_qc}, the computed $(H,h,L,b)$ define a quadratic program from Definition \ref{def:MPC-QP} for which $s_{MPC}(x[k])\geq 0$ using $\tau_\text{QP} = 1$ and each element of $\tau_\text{in}$ being one. The proof is then immediate from the argument of Theorem \ref{thm:approx}.
\end{proof}

\begin{remark}
The main advantage of using Theorem \ref{thm:syn} to generate an approximating MPC is that it includes the worst-case approximation error  directly within the cost function being minimised in~\eqref{cost_syn}. 
\end{remark}

\begin{figure}
    \centering
    \includegraphics[width=\textwidth]{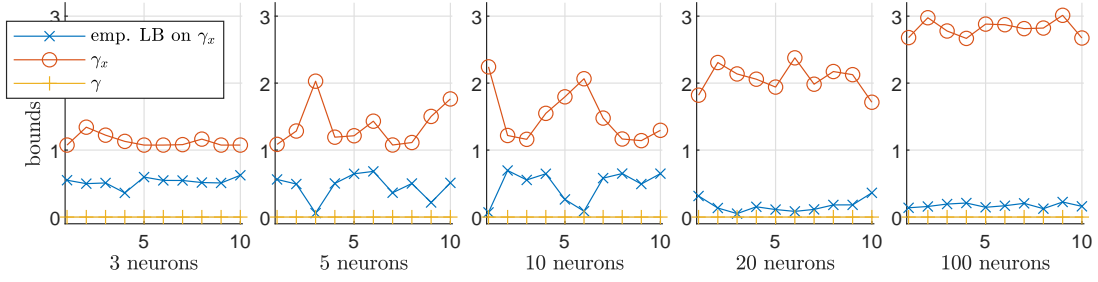}
    \caption{Bounds $\gamma$ and $\gamma_x$ and empirical lower bound on $\gamma_x$ for respectively 10 NNs with $\{3,5,10,20,100\}$ neurons in the hidden layer.}
    \label{fig:bounds}
\end{figure}

\section{Numerical examples}
In the following, we apply Theorem \ref{thm:approx} to bound the difference between an MPC and a NN controller. All NNs were trained using Pytorch and all SDPs were  solved in Matlab using the parser \texttt{YALMIP} (\cite{yalmip}) and solver \texttt{MOSEK} (\cite{mosek}).

We consider the following linear time-invariant system
\begin{equation}
    x[k+1]=
        \begin{bmatrix}
            4/3 & -2/3\\
            1 & 0
        \end{bmatrix}
    x[k]+
        \begin{bmatrix}
            0 \\ 1
        \end{bmatrix}
    u[k],
\end{equation}
and solve \eqref{MPC_P} choosing the horizon length $N=10$ and the input constraints $-0.1\leq u_\text{MPC}\leq~0.1$ such that $L=\begin{bmatrix}1 & -1 \end{bmatrix}^\top$, $b=\begin{bmatrix}0.1 & 0.1 \end{bmatrix}^\top$. The MPC quadratic cost function of \eqref{MPC_opt} was parameterised by
$$
\widetilde{P} =  \begin{bmatrix}7.1667  & -4.2222\\ -4.2222 & 4.6852\end{bmatrix},~    \widetilde{Q} = \begin{bmatrix}1  & -2/3\\ -2/3  & 3/2\end{bmatrix},~
\widetilde{R} = 1,
$$
yielding $Q=\blkdiag(\widetilde{Q},\,\widetilde{Q},\,\dots,\widetilde{Q},\,\widetilde{P})$ and $R=\blkdiag(\tilde{R},\,\dots,\,\tilde{R})$. 
Subsequently, feed-forward neural networks  with one hidden layer and ReLU activation were trained using back-propagation to approximate this MPC controller. The results are detailed in Figure \ref{fig:bounds} where upper and lower bounds for the obtained approximation error are shown. The upper bound, defined in terms of  $\gamma$ and $\gamma_x$, was obtained by solving Theorem \ref{thm:approx} while the empirical lower bound was obtained by sampling. For different numbers of neurons in the hidden layer, ten randomly initialised NNs were trained and their approximation errors are plotted in the figure. For all cases, it was found that $\gamma =0$, agreeing with the prediction of Remark \ref{rem:gamma}. The computed upper bound was always above the empirical lower bound, with the bounds being relatively tight for the small NNs. The larger gap between the upper and lower bounds for the larger NN examples is due to the increasing number of neurons causing an increased abstraction of the neural network mapping by the quadratic constraints, and hence an increase in the conservatism of the problem.

\section*{Conclusions}
A method to bound the worst-case error between feedback control policies based upon neural networks (NNs) and model-predictive control (MPC) was introduced. Using these bounds, a method to synthesize MPC policies minimising the worst-case error with respect to a NN was developed. The overriding goal of this work was to improve our understanding of, and even quantify, the relationship between MPC and NNs, helping to bridge the gap between model-driven and data-driven control. Future work will explore using the bounds to generate closed-loop stability certificates of both the MPC and NN controllers, reducing the conservatism of the bounds and applying them to the verification of control systems used in practice.

\section{Appendix}\label{app:mats}

\subsection{Appendix 1: Activation function properties}\label{app:def_phi}
\begin{definition}[\cite{drummond2021reduced}]\label{def:prop}
The activation function $\phi(s): \mathcal{S} \subset \mathbb{R} \to \mathbb{R}$ satisfying $\phi(0) = 0$ is said to be \emph{sector bounded} if 
$
\frac{\phi(s)}{s}~\in~[0, \delta] ~ \forall s \in \mathcal{S}, ~\delta > 0,
$
and \emph{slope restricted} if 
$
\frac{d\phi(s)}{ds}~\in~[\underline{\beta}, \beta],  ~\forall s \in \mathcal{S}, ~ \beta >0.
$
If $\underline{\beta} = 0$ then the nonlinearity is \emph{monotonic} and if $\phi(s)$ is slope restricted then it is also sector bounded. The activation function $\phi(s)$ is \emph{bounded} if 
$
\phi(s)~\in~[\underline{c}, \overline{c}],~ \forall s \in \mathcal{S},
$
it is \emph{positive} if
$
\phi(s) \geq 0 , ~ \forall s \in \mathcal{S},
$
its \emph{complement is positive} if
$
\phi(s)-s \geq 0 , ~ \forall s \in \mathcal{S},
$
and it satisfies the \emph{complementarity condition} if 
$
(\phi(s)-s)\phi(s) = 0, ~ \forall s \in \mathcal{S}.
$
\end{definition}

\subsection{Appendix 2: Quadratic constraints for the activation functions}\label{app:qcs}

\begin{lemma}[\cite{drummond2021reduced}]\label{lem:prop}
Consider the vectors $y,y_1 \in \mathbb{R}^{n_y}$,  and $\upsilon \in \mathbb{R}^{n_{\upsilon}}$ that are mapped component-wise through the activation functions $\phi(\cdot):\mathbb{R}^{n_y} \to \mathbb{R}^{n_y}$ and $\tilde{\phi}(\cdot):\mathbb{R}^{n_v} \to \mathbb{R}^{n_v}$. If $\phi(y)$ is
\emph{sector-bounded}, then
\begin{subequations}\label{qi_gen}\begin{align}
 (\delta y-\phi(y))^T{\bf T}^{\text{s}}\phi(y)  & \geq 0,\quad \forall y \in \mathbb{R}^{n_y}, ~{\bf T}^{\text{s}}  \in \mathbb{D}^{n_y}_+ ; \label{sec_quad}
\end{align}
\emph{slope-restricted} then
 \begin{equation}
(\beta (y-y_1)-(\phi(y)-\phi(y_1))^T{\bf T}^{\text{sl}} (\phi(y)-\phi(y_1)-\underline{\beta}(y-y_1))  \geq 0;
     \quad   \forall \{y, \, y_1 \} \in \mathbb{R}^{n_y},\,  {\bf T}^{\text{sl}}  \in \mathbb{D}^{n_y}_+;
 \end{equation}
\emph{bounded} then 
\begin{align}
 (\overline{c}-\phi(y))^T{\bf T}^{B}(\phi(y)-\underline{c})  & \geq 0,\quad \forall y \in \mathbb{R}^{n_y}, \, {\bf T}^{B}\in \mathbb{D}^{n_y}_+;
\end{align}
\emph{positive} then
\begin{align}
({\bf T}^{+})^T  \phi(y)\geq 0,  \quad \forall y \in \mathbb{R}^{n_y},\, {\bf T}^{+} \in \mathbb{R}_+^{n_y};
\end{align}
such that its \emph{complement is positive} then 
\begin{align}
({\bf T}^{c+})^T (\phi(y)-y) \geq 0,  \quad \forall y \in \mathbb{R}^{n_y}, \, {\bf T}^{c+}  \in \mathbb{R}^{n_y}_+.
\end{align}
If $\phi(y)$ satisfies the \emph{complementary} condition then 
\begin{align}
 (\phi(y)-y)^T{\bf T}^{0}\phi(y)  = 0,\quad \forall y \in \mathbb{R}^{n_y},\, {\bf T}^{0} \in \mathbb{D}^{n_y}.\label{comp_quad}
\end{align}
Additionally, if both $\phi(y)$ and $\tilde{\phi}(\upsilon)$ and their complements are positive then so are the \emph{cross terms} 
\begin{align}\label{cross_qcs_1}
\small
 \tilde{\phi}(\upsilon)^T{\bf T}^{\times}(\phi(y)-y) \geq 0,  \; \forall \upsilon \in \mathbb{R}^{n_v}, y \in \mathbb{R}^{n_y}, \, {\bf T}^{\times} \in \mathbb{R}_+^{n_{\upsilon} \times n_y}, \\
 \tilde{\phi}(\upsilon)^T{\bf T}^{\otimes}\phi(y) \geq 0,  \; \forall \upsilon \in \mathbb{R}^{n_v}, y \in \mathbb{R}^{n_y}, \, {\bf T}^{\otimes} \in \mathbb{R}_+^{n_{\upsilon} \times n_y}  .\label{cross_qcs_2}
\end{align}\end{subequations}
\end{lemma}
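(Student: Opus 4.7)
The plan is to prove Lemma \ref{lem:prop} by reducing every listed quadratic inequality to a weighted sum of scalar inequalities, exploiting the fact that each activation function acts component-wise and that the multiplier matrices are either diagonal non-negative, element-wise non-negative vectors, or element-wise non-negative matrices. In each case the overall inequality becomes $\sum_i w_i q_i \geq 0$ (or $=0$ for the complementarity case), with $w_i \geq 0$ and $q_i$ a scalar quantity that depends only on the $i$-th component, so it suffices to verify the scalar version of each inequality on a single component and then sum with the non-negative weights.

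For the sector-bounded case I would write $(\delta y - \phi(y))^\top {\bf T}^{\text{s}} \phi(y) = \sum_i ({\bf T}^{\text{s}})_{ii} (\delta y_i - \phi(y_i))\phi(y_i)$ and verify that each scalar factor is non-negative using $\phi(y_i)/y_i \in [0,\delta]$ (with the boundary case $y_i = 0$ handled by $\phi(0)=0$). The bounded, positive and complement-positive cases are treated identically, each requiring only a one-line scalar inequality drawn directly from Definition \ref{def:prop} ($\overline{c}-\phi(s) \geq 0$ and $\phi(s)-\underline{c} \geq 0$; $\phi(s)\geq 0$; $\phi(s)-s\geq 0$) combined with non-negativity of the associated multiplier. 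For the complementarity case, the scalar identity $(\phi(s)-s)\phi(s) = 0$ holds component-wise, so any (possibly signed) diagonal weighting still yields zero, matching the fact that ${\bf T}^0$ is only constrained to be diagonal. The cross-term inequalities \eqref{cross_qcs_1}-\eqref{cross_qcs_2} follow because, when both $\phi$ and $\tilde{\phi}$ and their complements are positive, each scalar summand $\tilde{\phi}(\upsilon_j)(\phi(y_i)-y_i)$ or $\tilde{\phi}(\upsilon_j)\phi(y_i)$ is a product of two non-negative scalars, and the off-diagonal weights $({\bf T}^\times)_{ji},({\bf T}^\otimes)_{ji}\geq 0$ preserve non-negativity of the double sum.

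The slope-restricted case is where I expect the main obstacle, because it involves two independent inputs and a product of two affine combinations. For each component $i$ with $y_i \neq y_{1,i}$, I would invoke the mean value theorem to place the scalar difference quotient $r_i := (\phi(y_i)-\phi(y_{1,i}))/(y_i - y_{1,i})$ inside $[\underline{\beta},\beta]$. The $i$-th summand of the claimed quadratic form then equals $({\bf T}^{\text{sl}})_{ii}\,(y_i-y_{1,i})^2 (\beta - r_i)(r_i - \underline{\beta})$, which is non-negative because both $\beta-r_i$ and $r_i-\underline{\beta}$ are non-negative and the weight is non-negative; the degenerate case $y_i=y_{1,i}$ makes the summand vanish. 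The care needed here lies in carrying out the sign analysis uniformly in whether $y_i > y_{1,i}$ or $y_i < y_{1,i}$, and in noting that for non-differentiable activations such as the ReLU, the mean value step is replaced by the standard observation that any function whose one-sided Dini derivatives lie in $[\underline{\beta},\beta]$ has difference quotients in the same interval, so the argument goes through unchanged.
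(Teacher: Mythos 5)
Your proposal is correct: the paper itself gives no proof of Lemma \ref{lem:prop} (it is cited from the earlier work and covered by the remark that ``several of the proofs of the lemmas are immediate and so are not shown''), and your component-wise reduction to scalar inequalities drawn from Definition \ref{def:prop}, weighted by the non-negative multiplier entries, is exactly the argument that makes it immediate. Your handling of the two non-trivial points --- the sign analysis of the difference quotient in the slope-restricted case (including the non-differentiable ReLU) and the observation that ${\bf T}^{0}$ need not be sign-constrained because the complementarity summands vanish identically --- is accurate and complete.
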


\subsection{Appendix 3:  Matrix definitions}\label{app:qcs}

{\small \begin{subequations}\label{mats}\begin{align}
     \Omega = 
        \begin{bmatrix}-\gamma_x I_{n_x}  & \bm{0}_{n_{x} \times M}  &\bm{0}_{n_{x} \times N} & \bm{0}_{n_{x} }
    \\ \star & {C_\text{NN}}^\top  C_\text{NN} & -{C_\text{NN}}^\top C_\text{MPC}  & \bm{0}_{M}
    \\ \star & \star & {C_\text{MPC}}^\top  C_\text{MPC}  &  -{C_\text{MPC}}^\top{D_\text{NN}} \\
    \star & \star & \star & -\gamma
    \end{bmatrix},
~
    \Lambda_{NN} = 
    \begin{bmatrix} {\lambda}_{11}  & {\lambda}_{12}  &\bm{0}_{n_{x} \times N} & {\lambda}_{13} 
    \\ \star & {\lambda}_{22} & \bm{0}_{M \times N}  & {\lambda}_{23}   
    \\ \star & \star & \bm{0}_{N \times N}  &  \bm{0}_{N} \\
    \star & \star & \star & {\lambda}_{33}
    \end{bmatrix},
 \end{align}
 \begin{align}\Lambda_{MPC} = 
    \begin{bmatrix} \bm{0}_{n_x \times n_x}  & \bm{0}_{n_x \times M}  & -(\tau_{QP}h^\top)/2 & \bm{0}_{n_x}  
    \\ \star & \bm{0}_{M \times M} & \bm{0}_{M \times N}  & \bm{0}_{M}   
    \\ \star & \star &  -\tau_{QP}H  &  -L^\top \tau_{in} \\
    \star & \star & \star & b^\top\tau_{in} + \tau_{in}^\top b
    \end{bmatrix},
     \end{align}
 \begin{align}
    X = 
    \begin{bmatrix} -2\tau_x & \bm{0}_{n_{x} \times M} & \bm{0}_{n_{x} \times N} & \tau_x (\overline{x} + \underline{x})   \\ \star & \bm{0}_{M \times M} & \bm{0}_{M \times N} & \bm{0}_{M}
    \\ \star & \star & \bm{0}_{N \times N} & \bm{0}_{N}
    \\  \star & \star & \star &  -(\overline{x}^\top\tau_x\underline{x}+\underline{x}^\top\tau_x\overline{x} ) \end{bmatrix},
    ~
   \check{\Lambda}_\text{MPC} =  
   \begin{bmatrix} \bm{0}_{n_x \times n_x}  & \bm{0}_{n_x \times M}  & -h^\top/2 & \bm{0}_{n_x}  
    \\ \star & \bm{0}_{M \times M} & \bm{0}_{M \times N}  & \bm{0}_{M}   
    \\ \star & \star &  -H  &  -L^\top \\
    \star & \star & \star & b + b^\top
    \end{bmatrix}.
\end{align}

\end{subequations}}

\acks{Ross Drummond would like to thank the Royal academy of engineering for funding this research through a UKIC Fellowship.
This work was partly funded by Deutsche Forschungsgemeinschaft (DFG, German Research Foundation) under Germany’s Excellence Strategy - EXC
2075 - 390740016. We acknowledge the support by the Stuttgart Center for Simulation Science (SimTech). The authors thank the International Max Planck Research School for Intelligent Systems (IMPRS-IS) for supporting Patricia Pauli.
}

\bibliography{bibliog} 

\begin{thebibliography}{33}
\providecommand{\natexlab}[1]{#1}
\providecommand{\url}[1]{\texttt{#1}}
\expandafter\ifx\csname urlstyle\endcsname\relax
  \providecommand{\doi}[1]{doi: #1}\else
  \providecommand{\doi}{doi: \begingroup \urlstyle{rm}\Url}\fi

\bibitem[Alessio and Bemporad(2009)]{alessio2009survey}
Alessandro Alessio and Alberto Bemporad.
\newblock A survey on explicit model predictive control.
\newblock In \emph{Nonlinear model predictive control}, pages 345--369.
  Springer, 2009.

\bibitem[Andersen and Andersen(2000)]{mosek}
Erling~D Andersen and Knud~D Andersen.
\newblock The {MOSEK} interior point optimizer for linear programming: an
  implementation of the homogeneous algorithm.
\newblock In \emph{High Performance Optimization}, pages 197--232. Springer,
  2000.

\bibitem[Angeli(2009)]{angeli2009convergence}
David Angeli.
\newblock Convergence in networks with counterclockwise neural dynamics.
\newblock \emph{IEEE Transactions on Neural Networks}, 20\penalty0
  (5):\penalty0 794--804, 2009.

\bibitem[Bai et~al.(2019)Bai, Kolter, and Koltun]{bai2019deep}
Shaojie Bai, J~Zico Kolter, and Vladlen Koltun.
\newblock Deep equilibrium models.
\newblock \emph{arXiv preprint arXiv:1909.01377}, 2019.

\bibitem[Barabanov and Prokhorov(2002)]{barabanov}
Nikita~E Barabanov and Danil~V Prokhorov.
\newblock Stability analysis of discrete-time recurrent neural networks.
\newblock \emph{IEEE Transactions on Neural Networks}, 13\penalty0
  (2):\penalty0 292--303, 2002.

\bibitem[Bemporad et~al.(2002)Bemporad, Morari, Dua, and
  Pistikopoulos]{bemporad2002explicit}
Alberto Bemporad, Manfred Morari, Vivek Dua, and Efstratios~N Pistikopoulos.
\newblock The explicit linear quadratic regulator for constrained systems.
\newblock \emph{Automatica}, 38\penalty0 (1):\penalty0 3--20, 2002.

\bibitem[Chu and Glover(1999)]{chu}
Yun-Chung Chu and Keith Glover.
\newblock Bounds of the induced norm and model reduction errors for systems
  with repeated scalar nonlinearities.
\newblock \emph{IEEE Transactions on Automatic Control}, 44\penalty0
  (3):\penalty0 471--483, 1999.

\bibitem[Darup(2020)]{darup2020exact}
Moritz~Schulze Darup.
\newblock Exact representation of piecewise affine functions via neural
  networks.
\newblock In \emph{Procs. of the European Control Conference}, pages
  1073--1078. IEEE, 2020.

\bibitem[Desoer and Vidyasagar(2009)]{desoer2009feedback}
Charles~A Desoer and Mathukumalli Vidyasagar.
\newblock \emph{Feedback systems: Input-output properties}.
\newblock SIAM, 2009.

\bibitem[Drummond et~al.(2021)Drummond, Turner, and
  Duncan]{drummond2021reduced}
Ross Drummond, Mathew~C Turner, and Stephen~R Duncan.
\newblock Reduced-order neural network synthesis with robustness guarantees.
\newblock \emph{arXiv preprint arXiv:2102.09284}, 2021.

\bibitem[El~Ghaoui et~al.(2021)El~Ghaoui, Gu, Travacca, Askari, and
  Tsai]{el2021implicit}
Laurent El~Ghaoui, Fangda Gu, Bertrand Travacca, Armin Askari, and Alicia Tsai.
\newblock Implicit deep learning.
\newblock \emph{SIAM Journal on Mathematics of Data Science}, 3\penalty0
  (3):\penalty0 930--958, 2021.

\bibitem[Fabiani and Goulart(2021)]{fabiani2021reliably}
Filippo Fabiani and Paul~J Goulart.
\newblock Reliably-stabilizing piecewise-affine neural network controllers.
\newblock \emph{arXiv preprint arXiv:2111.07183}, 2021.

\bibitem[Fazlyab et~al.(2019)Fazlyab, Robey, Hassani, Morari, and
  Pappas]{fazlyab2019efficient}
Mahyar Fazlyab, Alexander Robey, Hamed Hassani, Manfred Morari, and George~J
  Pappas.
\newblock Efficient and accurate estimation of {L}ipschitz constants for deep
  neural networks.
\newblock \emph{arXiv preprint arXiv:1906.04893}, 2019.

\bibitem[Fazlyab et~al.(2020)Fazlyab, Morari, and Pappas]{fazlyab2020safety}
Mahyar Fazlyab, Manfred Morari, and George~J Pappas.
\newblock Safety verification and robustness analysis of neural networks via
  quadratic constraints and semidefinite programming.
\newblock \emph{IEEE Transactions on Automatic Control}, 2020.

\bibitem[Green and Limebeer(2012)]{green2012linear}
Michael Green and David~JN Limebeer.
\newblock \emph{Linear robust control}.
\newblock Courier Corporation, 2012.

\bibitem[Hertneck et~al.(2018)Hertneck, K{\"o}hler, Trimpe, and
  Allg{\"o}wer]{hertneck2018learning}
Michael Hertneck, Johannes K{\"o}hler, Sebastian Trimpe, and Frank
  Allg{\"o}wer.
\newblock Learning an approximate model predictive controller with guarantees.
\newblock \emph{IEEE Control Systems Letters}, 2\penalty0 (3):\penalty0
  543--548, 2018.

\bibitem[J{\"o}nsson(2001)]{jonsson2001lecture}
Ulf J{\"o}nsson.
\newblock Lecture notes on integral quadratic constraints.
\newblock 2001.

\bibitem[Li et~al.(2006)Li, Heath, and Lennox]{li2006stability}
Guang Li, William~P Heath, and Barry Lennox.
\newblock The stability analysis of systems with nonlinear feedback expressed
  by a quadratic program.
\newblock In \emph{Procs. of the Conference on Decision and Control}, pages
  4247--4252. IEEE, 2006.

\bibitem[Li et~al.(2021)Li, Drummond, and Duncan]{li2021robust}
Jiaqi Li, Ross Drummond, and Stephen~R Duncan.
\newblock Robust error bounds for quantised and pruned neural networks.
\newblock In \emph{Learning for Dynamics and Control}, pages 361--372. PMLR,
  2021.

\bibitem[Lofberg(2004)]{yalmip}
Johan Lofberg.
\newblock {YALMIP}: {A} toolbox for modeling and optimization in {MATLAB}.
\newblock In \emph{International Conference on Robotics and Automation}, pages
  284--289. IEEE, 2004.

\bibitem[Parisini and Zoppoli(1995)]{parisini1995receding}
Thomas Parisini and Riccardo Zoppoli.
\newblock A receding-horizon regulator for nonlinear systems and a neural
  approximation.
\newblock \emph{Automatica}, 31\penalty0 (10):\penalty0 1443--1451, 1995.

\bibitem[Pauli et~al.(2021{\natexlab{a}})Pauli, Gramlich, Berberich, and
  Allg{\"o}wer]{pauli2021linear}
Patricia Pauli, Dennis Gramlich, Julian Berberich, and Frank Allg{\"o}wer.
\newblock Linear systems with neural network nonlinearities: Improved stability
  analysis via acausal {Z}ames-{F}alb multipliers.
\newblock \emph{arXiv preprint arXiv:2103.17106}, 2021{\natexlab{a}}.

\bibitem[Pauli et~al.(2021{\natexlab{b}})Pauli, Koch, Berberich, Kohler, and
  Allgower]{pauli2021training}
Patricia Pauli, Anne Koch, Julian Berberich, Paul Kohler, and Frank Allgower.
\newblock Training robust neural networks using {L}ipschitz bounds.
\newblock \emph{IEEE Control Systems Letters}, 2021{\natexlab{b}}.

\bibitem[Revay et~al.(2020)Revay, Wang, and Manchester]{revay2020lipschitz}
Max Revay, Ruigang Wang, and Ian~R Manchester.
\newblock Lipschitz bounded equilibrium networks.
\newblock \emph{arXiv preprint arXiv:2010.01732}, 2020.

\bibitem[{\v{S}}iljak and Ze{\v{c}}evi{\'c}(2005)]{vsiljak2005control}
Dragoslav~D {\v{S}}iljak and AI~Ze{\v{c}}evi{\'c}.
\newblock Control of large-scale systems: Beyond decentralized feedback.
\newblock \emph{Annual Reviews in Control}, 29\penalty0 (2):\penalty0 169--179,
  2005.

\bibitem[Silver et~al.(2017)Silver, Schrittwieser, Simonyan, Antonoglou, Huang,
  Guez, Hubert, Baker, Lai, Bolton, et~al.]{silver2017mastering}
David Silver, Julian Schrittwieser, Karen Simonyan, Ioannis Antonoglou, Aja
  Huang, Arthur Guez, Thomas Hubert, Lucas Baker, Matthew Lai, Adrian Bolton,
  et~al.
\newblock Mastering the game of {Go} without human knowledge.
\newblock \emph{Nature}, 550\penalty0 (7676):\penalty0 354--359, 2017.

\bibitem[Sutton and Barto(2018)]{sutton2018reinforcement}
Richard~S Sutton and Andrew~G Barto.
\newblock \emph{Reinforcement learning: An introduction}.
\newblock MIT press, 2018.

\bibitem[Szegedy et~al.(2013)Szegedy, Zaremba, Sutskever, Bruna, Erhan,
  Goodfellow, and Fergus]{szegedy2013intriguing}
Christian Szegedy, Wojciech Zaremba, Ilya Sutskever, Joan Bruna, Dumitru Erhan,
  Ian Goodfellow, and Rob Fergus.
\newblock Intriguing properties of neural networks.
\newblock \emph{arXiv preprint arXiv:1312.6199}, 2013.

\bibitem[Turner and Drummond(2021)]{turner2021discrete}
Matthew~C Turner and Ross Drummond.
\newblock Discrete-time systems with slope restricted nonlinearities:
  Zames-{F}alb multiplier analysis using external positivity.
\newblock \emph{International Journal of Robust and Nonlinear Control},
  31\penalty0 (6):\penalty0 2255--2273, 2021.

\bibitem[Yin et~al.(2021{\natexlab{a}})Yin, Seiler, and
  Arcak]{yin2021stability}
He~Yin, Peter Seiler, and Murat Arcak.
\newblock Stability analysis using quadratic constraints for systems with
  neural network controllers.
\newblock \emph{IEEE Transactions on Automatic Control}, 2021{\natexlab{a}}.

\bibitem[Yin et~al.(2021{\natexlab{b}})Yin, Seiler, Jin, and
  Arcak]{yin2021imitation}
He~Yin, Peter Seiler, Ming Jin, and Murat Arcak.
\newblock Imitation learning with stability and safety guarantees.
\newblock \emph{IEEE Control Systems Letters}, 2021{\natexlab{b}}.

\bibitem[Zames and Falb(1968)]{zames1968stability}
George Zames and PL~Falb.
\newblock Stability conditions for systems with monotone and slope-restricted
  nonlinearities.
\newblock \emph{SIAM Journal on Control}, 6\penalty0 (1):\penalty0 89--108,
  1968.

\bibitem[Zhang et~al.(2019)Zhang, Bujarbaruah, and Borrelli]{zhang2019safe}
Xiaojing Zhang, Monimoy Bujarbaruah, and Francesco Borrelli.
\newblock Safe and near-optimal policy learning for model predictive control
  using primal-dual neural networks.
\newblock In \emph{Procs. of the American Control Conference}, pages 354--359.
  IEEE, 2019.

\end{thebibliography}

\end{document}